\documentclass[twocolumn,10pt]{article}
\usepackage[left=1.5cm,right=1.5cm,top=3.75cm,bottom=3.75cm]{geometry}

\usepackage{graphicx}

\usepackage{authblk}
\usepackage{amsmath}
\usepackage{physics}
\usepackage{amsfonts}

\usepackage{array}

\newcolumntype{+}{!{\vrule width 2pt}}

\newlength\savedwidth

\newcommand\thickhline{\noalign{\global\savedwidth\arrayrulewidth\global\arrayrulewidth 2pt}
\hline
\noalign{\global\arrayrulewidth\savedwidth}}

\usepackage{hyperref}

\usepackage[
sorting=none, 
defernumbers=true,
style=numeric-comp,
sortcites=true 
]{biblatex}

\AtEveryBibitem{\clearlist{language}}

\newcommand{\newcaption}[2]{
    \caption[#1]{\textbf{#1.} #2}}

\usepackage{soul, xcolor}

\usepackage{amsmath}
\usepackage{amsthm}

\newtheorem*{theorem*}{Theorem}

\title{Trophic structure predicts seizure propagation in brain network models}

\author[1]{Peter Kissack}
\author[2]{Catherine Drysdale}
\author[1]{Samuel Johnson}
\affil[1]{School of Mathematics, College of Engineering and Physical Sciences, University of Birmingham, Birmingham, UK}
\affil[2]{School of Mathematical Sciences, Lancaster University, Lancaster, UK}
\date{}

\begin{document}

\twocolumn[
\begin{@twocolumnfalse}
    \maketitle

\begin{abstract}
Epilepsy is widely regarded as a disorder driven by connectivity in the brain. We use a model of seizure dynamics on directed networks to investigate how structural properties affect seizure propensity. We find that properties such as trophic coherence, spectral radius, strong connectivity and non-normality are closely related to seizure propensity, and present a proof of a theoretical relationship between spectral radius and cycle structure. Our simulated results are robust to the kind of coupling used in the model and become stronger as network size is increased. They suggest that the overall directionality of information processing in the brain may be related to a propensity for epileptic seizures.
\end{abstract}
\vspace*{1em}

\end{@twocolumnfalse}
]

\section{Introduction} \label{sec:Intro}

Epilepsy is a neurological condition which is understood to be a disorder of pathological networks in the brain~\cite{berg_revised_2010,fisher_operational_2017}.
It follows that the topological and graph theoretic properties of such networks may provide key insights into mechanisms of seizure onset and propagation.
To this end, several conventional graph theoretical measures of topological features, such as notions of path length, clustering and node centrality, have been applied to networks inferred from various brain imaging data~\cite{van_diessen_functional_2013, diessen_brain_2014, royer_epilepsy_2022,pedersen_brain_2024,cottin_eeg_2026}.

Amongst the aspects of directed network topology with potential to impact a brain network's propensity to propagate seizure activity are hierarchical flow, cycles and strong connectivity, previously examined by the means of first transitive component (FTC)~\cite{benjamin_phenomenological_2012} and the existence of strongly-connected components and cycles~\cite{schmidt_dynamics_2014}.
The notions of trophic levels and trophic incoherence~\cite{johnson_trophic_2014,johnson_looplessness_2017,mackay_how_2020}, as well as related concepts such as the non-normality and spectrum of the adjacency matrix~\cite{rodgers_strong_2023,drysdale_connection_2025}, encapsulate these network properties at a global level. However, few studies have applied these to epileptic brain networks, despite their relevance to spreading dynamic behaviour on networks. 

In this paper we discuss these features of networks and how they relate to dynamical systems, using a phenomenological model of an epileptic brain network.
We simulate a phenomenological model of seizure propagation on directed networks and compare trophic incoherence, adjacency spectral radius and some related quantities describing asymmetry and strong-connectivity to a simulation outcome measure quantifying seizure propensity in a network model.

We compare results simulated using additive and diffusive coupling, a modelling choice which informs the influence of network interactions on model dynamics, and may significantly impact simulation outcomes~\cite{lopes_role_2023}. We also simulate both 20 node networks, of a similar size to those derived from standard, clinical EEG recordings, and larger, 128 node networks representing a higher resolution model of the brain, capable of exhibiting a higher complexity of network interactions.

We present a theoretical result relating the adjacency spectral radius to cycles and strong-connectivity and demonstrate that trophic structure and strong-connectivity are highly correlated to simulation outcomes in a network model of epileptic seizures, across different coupling types and network sizes.

\section{Results}

\begin{figure*}[!h]
 \centering\includegraphics[width=6in]{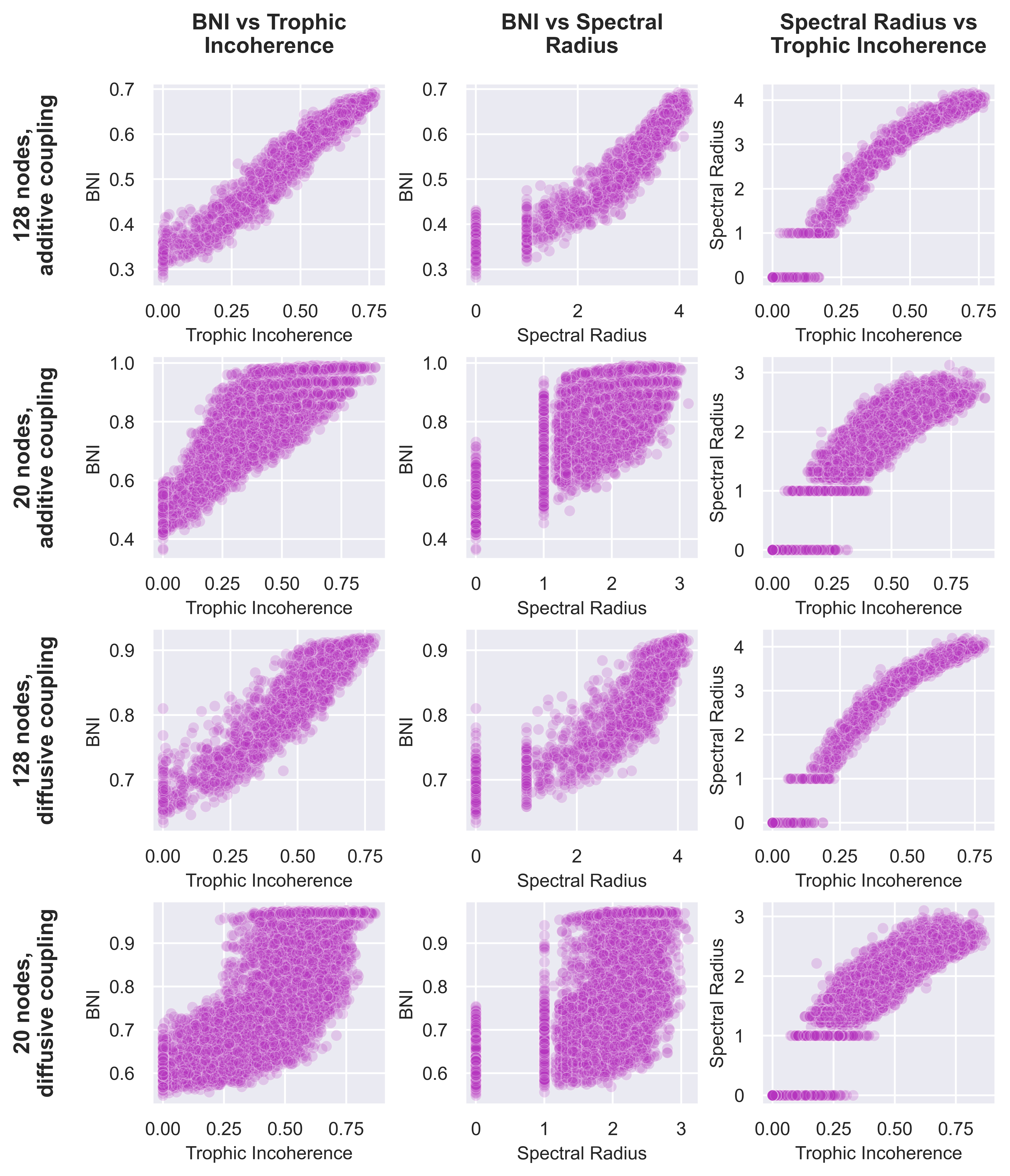}
\newcaption{Trophic Incoherence, Spectral Radius and BNI}{
Pairwise scatter plots of trophic incoherence, BNI and spectral radius, for 128- and 20-node networks sampled using the generalised preferential preying algorithm and simulated with additive and diffusive coupling.}
\label{fig:TrophicSpectScatter}
\end{figure*}

\begin{table*}[!ht]
\centering
\newcaption{Larger, additively-coupled network simulation correlations}{Spearman rank correlations between graph features and simulation outcomes from 2000 128-node networks sampled using the generalised preferential preying algorithm, simulated with additive coupling. In the first column Brain Network Ictogenicity (BNI), a metric reflecting the propensity of simulated seizure-like behaviour to spread in a network, is compared against a selection of global network properties, and properties of the node in which seizure-like behaviour most frequently emerges across simulations of the network.}
\label{tab:corrs_128_additive}
\begin{tabular}{|l+l|l|l|l|l|l|l|}
\hline
 & A & B & C & D & E & F & G  \\
\thickhline
A. BNI &&&&&&& \\ \hline
B. spectral radius & 0.957 &&&&&&
\\ \hline
C. trophic incoherence & 0.969 & 0.975 &&&&&
\\ \hline
D. largest strongly-connected component & 0.927 & 0.902 & 0.951 &&&&
\\ \hline
E. non normality & -0.766 & -0.731 & -0.756 & -0.711 &&&
\\ \hline
F. first transitive component size & -0.223 & -0.235 & -0.227 & -0.209 & 0.176 &&
\\ \hline
G. trophic level of the modal start node & -0.205 & -0.189 & -0.205 & -0.214 & 0.100 & -0.052 &
\\ \hline
H. reach of the modal start node & 0.880 & 0.815 & 0.845 & 0.870 & -0.621 & -0.157 & -0.430
\\ \hline
\end{tabular}
\end{table*}

\begin{table*}[!ht]
\centering
\newcaption{Smaller, additively-coupled network simulation correlations}{Spearman rank correlations between graph features and simulation outcomes from 10000 20-node networks sampled using the generalised preferential preying algorithm, simulated with additive coupling. In the first column Brain Network Ictogenicity (BNI), a metric reflecting the propensity of simulated seizure-like behaviour to spread in a network, is compared against a selection of global network properties, and properties of the node in which seizure-like behaviour most frequently emerges across simulations of the network.}
\label{tab:corrs_20_additive}
\begin{tabular}{|l+l|l|l|l|l|l|l|}
\hline
 & A & B & C & D & E & F & G  \\
\thickhline
A. BNI &&&&&&& \\ \hline
B. spectral radius & 0.771 &&&&&& 
\\ \hline
C. trophic incoherence & 0.855 & 0.921 &&&&& 
\\ \hline
D. largest strongly-connected component & 0.835 & 0.871 & 0.942 &&&&
\\ \hline
E. non normality & -0.562 & -0.597 & -0.649 & -0.591 &&&
\\ \hline
F. first transitive component size & 0.002 & -0.138 & -0.120 & -0.097 & 0.058 &&
\\ \hline
G. trophic level of the modal start node & -0.722 & -0.705 & -0.747 & -0.707 & 0.354 & -0.016 &
\\ \hline 
H. reach of the modal start node & 0.611 & 0.683 & 0.699 & 0.709 & -0.481 & -0.003 & -0.750
\\ \hline
\end{tabular}
\end{table*}

\begin{table*}[!ht]
\centering
\newcaption{Larger, diffusively-coupled network simulation correlations}{Spearman rank correlations between graph features and simulation outcomes from 2000 128-node networks sampled using the generalised preferential preying algorithm, simulated with diffusive coupling. In the first column Brain Network Ictogenicity (BNI), a metric reflecting the propensity of simulated seizure-like behaviour to spread in a network, is compared against a selection of global network properties, and properties of the node in which seizure-like behaviour most frequently emerges across simulations of the network.}
\label{tab:corrs_128_diffusive}
\begin{tabular}{|l+l|l|l|l|l|l|l|}
\hline
 & A & B & C & D & E & F & G  \\
\thickhline
A. BNI &&&&&&& \\ \hline
B. spectral radius & 0.905 &&&&&&
\\ \hline
C. trophic incoherence & 0.922 & 0.979 &&&&&
\\ \hline
D. largest strongly-connected component & 0.915 & 0.909 & 0.951 &&&&
\\ \hline
E. non normality & -0.726 & -0.747 & -0.765 & -0.722 &&&
\\ \hline
F. first transitive component size & -0.226 & -0.219 & -0.223 & -0.214 & 0.165 &&
\\ \hline
G. trophic level of the modal start node & -0.016 & -0.020 & -0.021 & -0.009 & 0.046 & 0.156 &
\\ \hline
H. reach of the modal start node & 0.882 & 0.784 & 0.809 & 0.826 & -0.617 & -0.298 & -0.232 
\\ \hline

\end{tabular}
\end{table*}

\begin{table*}[!ht]
\centering
\newcaption{Smaller, diffusively-coupled network simulation correlations}{Spearman rank correlations between graph features and simulation outcomes from 10000 20-node networks sampled using the generalised preferential preying algorithm, simulated with diffusive coupling. In the first column Brain Network Ictogenicity (BNI), a metric reflecting the propensity of simulated seizure-like behaviour to spread in a network, is compared against a selection of global network properties, and properties of the node in which seizure-like behaviour most frequently emerges across simulations of the network.}
\label{tab:corrs_20_diffusive}
\begin{tabular}{|l+l|l|l|l|l|l|l|}
\hline
 & A & B & C & D & E & F & G  \\
\thickhline
A. BNI &&&&&&& \\ \hline
B. spectral radius & 0.727 &&&&&&
\\ \hline
C. trophic incoherence & 0.808 & 0.923 &&&&& 
\\ \hline
D. largest strongly-connected component & 0.784 & 0.869 & 0.939 &&&& 
\\ \hline
E. non normality & -0.502 & -0.595 & -0.645 & -0.587 &&&
\\ \hline
F. first transitive component size & -0.001 & -0.113 & -0.090 & -0.068 & 0.053 &&
\\ \hline
G. trophic level of the modal start node & -0.020 & -0.023 & -0.035 & -0.023 & 0.041 & 0.456 &
\\ \hline
H. reach of the modal start node & 0.728 & 0.600 & 0.672 & 0.675 & -0.441 & -0.356 & -0.322
\\ \hline
\end{tabular}
\end{table*}

Fig~\ref{fig:TrophicSpectScatter} depicts the pairwise relationships between brain network ictogenicity (BNI), a model outcome describing the prevalence of the seizure state in a series of simulations of the network, spectral radius of the directed adjacency matrix, and trophic incoherence, in four different model scenarios, modelling additively or diffusively coupled networks with 20 or 128 nodes.

We observe five further features derived from network structure and simulation outcomes, namely the size of the largest strongly-connected component, network non-normality, size of the FTC, and the trophic level and reach of the node on which the first seizure activity most frequently emerges over multiple simulations, termed the modal seizure start node. Further details on all features are included in Methods. In addition to reporting on these results we note the discontinuous distribution of the adjacency spectral radius, and present a theorem which accounts for the discontinuity and provides an intuition for the network properties it encapsulates.

\subsection{Simulation results} \label{sec:simresults}

Several of the included network and simulation-derived features correlate to BNI, to different extents in the different model scenarios. Table~\ref{tab:corrs_128_additive} presents Spearman's rank correlation coefficients between graph features and simulation outcomes, including BNI, for the simulations of 2000 128-node networks with additive coupling, sampled using the generalised preferential preying algorithm, and Table~\ref{tab:corrs_20_additive} presents the same for 10000 20-node networks.
Tables~\ref{tab:corrs_128_diffusive} and \ref{tab:corrs_20_diffusive} show respectively the same for simulations with diffusive coupling.

Coupling type does not affect network sampling: features solely encapsulating network structure, rather than simulation outcomes, have the same distributions in the additive and diffusive samples for each network size and sampling method.

Fig~\ref{fig:TrophicSpectScatter} shows scatter plots of trophic incoherence, spectral radius and BNI for both 20 and 128-node networks.
For the larger networks, there are very strong positive relationships between spectral radius, trophic incoherence and BNI, which are present to a lesser extent in the smaller networks.
Results for 20-node graphs sampled using the Erd\H{o}s-Renyi algorithm are shown in Supplementary Tables~S1-S2 and Supplementary Fig~S1; these are consistent with the presented results over a narrower distribution of trophic incoherence, as expected.
Supplementary Figs~S2-S7 show scatterplot matrices of all features for each simulation configuration.

\begin{figure*}[!h]
\centering\includegraphics[width=6in]{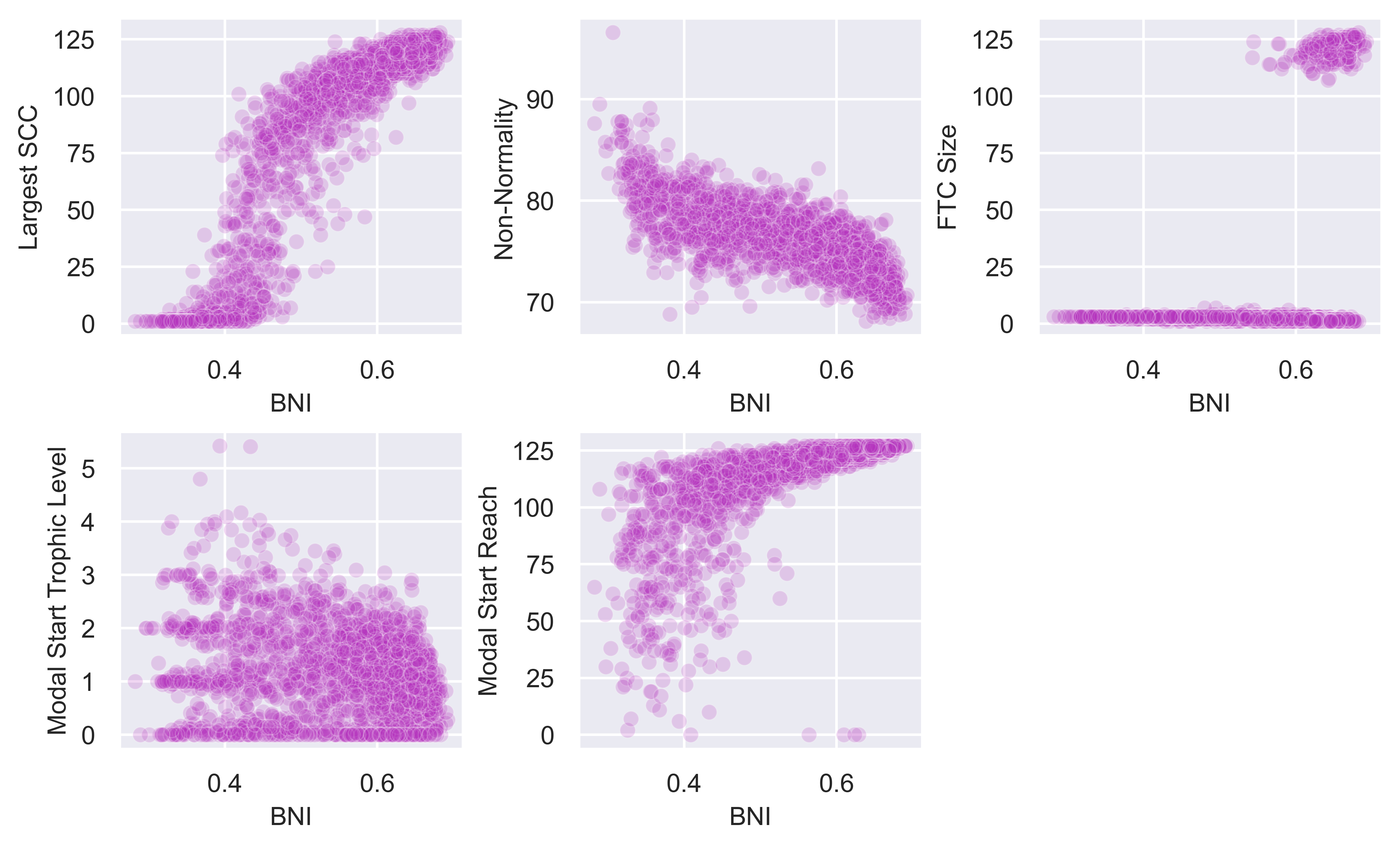}
\newcaption{Relationships of additional features with BNI}{
Scatter plots of largest strongly-connected component, non-normality, FTC size, and trophic level and reach of the modal seizure start node against BNI, for 2000 128-node networks sampled using the generalised preferential preying algorithm and simulated with additive coupling.}
\label{fig:AddFeatBNIScatter}
\end{figure*}

Fig~\ref{fig:AddFeatBNIScatter} shows scatter-plots demonstrating the respective relationships of largest strongly-connected component, non-normality, size of the FTC, and trophic level and reach of the modal seizure start node with BNI, in the case of 128-node networks sampled using the generalised preferential preying algorithm and simulated with additive coupling. Further pairwise scatterplots between all features for all model scenarios are included as Supplementary Figs~S2--S7. Size of the largest strongly-connected component has a strong, nonlinear positive relationship with BNI, which results in a higher Spearman's rank correlation coefficient than between spectral radius and BNI in several of the modelling scenarios (Tables~\ref{tab:corrs_128_additive}--\ref{tab:corrs_20_diffusive}, S1-S2). Non-normality is also negatively correlated with BNI. Size of the strongly-connected component, non-normality, spectral radius and trophic incoherence are noticeably correlated in all modelling scenarios (Supplementary Figs~S2-S7; Tables~\ref{tab:corrs_128_additive}--\ref{tab:corrs_20_diffusive}, S1-S2); these quantities encapsulate related concepts of cyclicity, strong-connectedness and directional asymmetries.

In all network sampling scenarios there is a cluster of networks with a large FTC, which is close to the size of the largest strongly-connected component (Supplementary Figs~S2-S7); these tend to have a high BNI. Since FTC is a union of strongly-connected components, a sufficiently large FTC is likely to include the largest strongly-connected component, therefore this reflects the relationship between BNI and size of the largest strongly-connected component.

Networks for which the node on which the first seizure activity most frequently emerges has high trophic level or small reach are generally less likely to attain high BNI (Supplementary Figs~S2-S7); such a node does not have influence over many other nodes and its behaviour is therefore unable to propagate to a large proportion of the network~\cite{rodgers_influence_2023}.

BNI calculated using the additively-coupled model has a stronger correlation with trophic incoherence and spectral radius than for the diffusively-coupled model (Tables~\ref{tab:corrs_128_additive}\&\ref{tab:corrs_128_diffusive},~\ref{tab:corrs_20_additive}\&\ref{tab:corrs_20_diffusive}). This supports the analytical relationship between spectral radius and model stability in the case of additive coupling, outlined in methods.

\subsection{Spectral properties of the directed adjacency matrix} \label{sec:adjacency}

We now prove that the spectral radius of the directed adjacency matrix -- which as a consequence of the Perron-Frobenius Theorem is equal to its maximally real eigenvalue -- is zero in the case of directed acyclic graphs, one in the case of graphs containing non-overlapping cycles, and greater than one in the case of graphs containing overlapping cycles, resulting in the distribution evident in the histogram displayed in Fig~\ref{fig:spectralradius}, which depicts the distribution of spectral radii for the sample of 128-node networks drawn for the additive coupling simulations.
In Methods we relate the spectral radius of the adjacency matrix to the linear stability of the network in the case of additive coupling.

\begin{figure}[!h]    \centering\hspace*{-2.5mm}\includegraphics[width=0.5\textwidth]{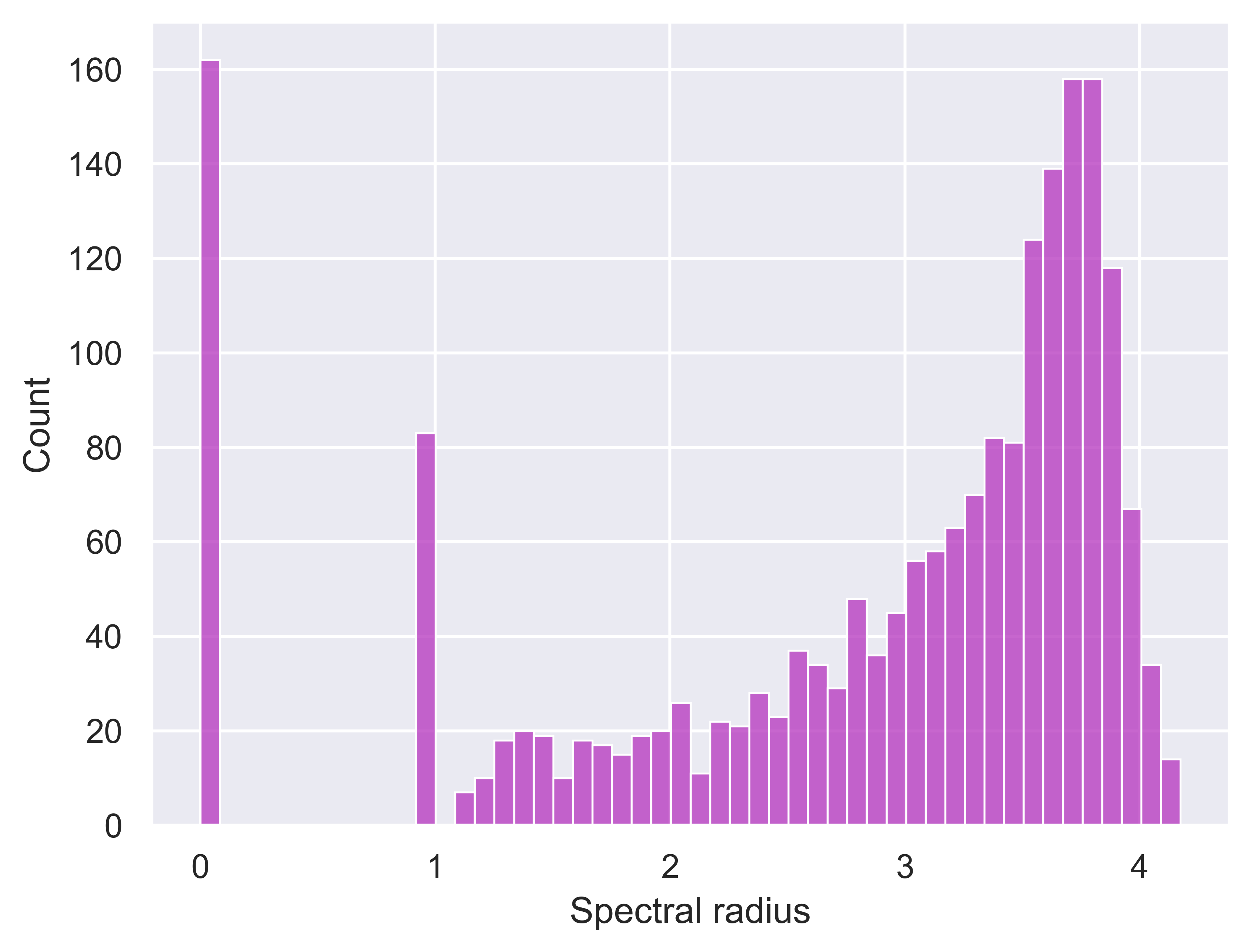}
    \newcaption{Histogram of adjacency spectral radius values}{Frequencies of values of the spectral radius of the adjacency matrix for the 2000 128-node networks of mean degree 4.0 sampled using generalised preferential preying, for the additive coupling simulations.}
    \label{fig:spectralradius}\end{figure}

We specify here the adjacency matrix of a directed graph with $N$ vertices to be an $N\times N$ matrix $A$, with entries $a_{ij}$ such that

\begin{equation}
    a_{ij}=
    \begin{cases}
        1, &\text{if there exists a directed edge } i \rightarrow j \\
        0, &\text{otherwise.}
    \end{cases}
\end{equation}

We say that cycles ``overlap'' if they have at least one vertex in common. Note that cycles are by definition strongly-connected, and a strongly-connected component with $k>1$ vertices comprises either a single $k-$cycle, or multiple overlapping cycles.

\begin{theorem*}
    Let $G$ be a directed graph with adjacency matrix $A$. Let $\rho_A$ be the spectral radius of $A$.
    \begin{itemize}
        \item[(i)] If $G$ is a directed acyclic graph (DAG), $\rho_A=0$.
        \item[(ii)] If $G$ contains at least one cycle, but no two cycles which overlap, $\rho_A=1$.
        \item[(iii)] If $G$ contains at least one pair of overlapping cycles, $\rho_A>1$.
    \end{itemize}
\end{theorem*}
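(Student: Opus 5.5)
The plan is to prove all three parts with two tools. The first is the walk-counting identity $(A^k)_{ij} = $ number of directed walks of length $k$ from $i$ to $j$. The second is the Gelfand formula $\rho_A = \lim_{k\to\infty}\|A^k\|^{1/k}$; for a nonnegative matrix this lets us read $\rho_A$ off the growth rate of walk counts. We also use the standard reduction to strongly-connected components. Ordering vertices by a topological order of the condensation makes $A$ block upper triangular, with diagonal blocks equal to the adjacency matrices $A_C$ of the strongly-connected components $C$. Hence $\rho_A=\max_C \rho_{A_C}$, and it suffices to compute the spectral radius of each irreducible (or $1\times1$) block.

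For (i), a DAG has no walk longer than $N-1$, so $A^N=0$. Then $A$ is nilpotent and every eigenvalue is $0$, giving $\rho_A=0$. Equivalently, every strongly-connected component is a single vertex without a loop, so every block is $[0]$.

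For (ii), suppose no two cycles share a vertex. Then each nontrivial strongly-connected component is exactly one cycle, as noted before the statement. A single $k$-cycle has a permutation matrix as its block, whose eigenvalues are the $k$-th roots of unity, so its spectral radius is $1$. A self-loop counts as a $1$-cycle with block $[1]$. All other components are trivial with block $[0]$. Since at least one cycle exists, $\rho_A=\max_C\rho_{A_C}=1$.

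For (iii), take two distinct cycles $C_1$, $C_2$ of lengths $p$ and $q$ sharing a vertex $v$. The key observation is that the closed walks from $v$ grow exponentially. Concatenating the closed walks $C_1$ and $C_2$ based at $v$ in any order gives closed walks of length $pq$. Using $q$ copies of $C_1$ or $p$ copies of $C_2$ produces two distinct closed walks $W_1, W_2$ of length $L=pq$ at $v$. Any word in $\{W_1,W_2\}$ of length $m$ is then a closed walk of length $mL$.

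The main obstacle is showing these $2^m$ words give distinct walks. I would handle it by noting that $W_1 \neq W_2$ as vertex sequences and both have the same length $L$. The concatenation therefore splits uniquely into consecutive length-$L$ blocks, so distinct words yield distinct walks. (Distinctness of $W_1$ and $W_2$ holds because $C_1\ne C_2$ as cycles through $v$: the power $W_1$ traverses $C_1$'s vertex sequence and $W_2$ traverses $C_2$'s, both starting at $v$. If these sequences were equal, the first-return structure would force $C_1=C_2$.)

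It follows that $(A^{mL})_{vv}\ge 2^m$. Hence $\|A^{mL}\|^{1/(mL)}\ge 2^{1/L}$, and by the Gelfand formula $\rho_A\ge 2^{1/(pq)}>1$.
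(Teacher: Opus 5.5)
Your proposal is correct. Parts (i) and (ii) are essentially the paper's argument: nilpotency from $A^N=0$, then block upper-triangular form over strongly-connected components, where each nontrivial component is a single cycle with a cyclic permutation block.

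Part (iii) takes a genuinely different route. The paper argues qualitatively via Perron--Frobenius. The overlapping cycles lie in one strongly-connected component with irreducible block $B$. Deleting edges until only non-overlapping cycles remain gives $0\le B'\le B$, $B'\neq B$, with $\rho_{B'}=1$ by (ii). The strict-inequality clause of Perron--Frobenius for irreducible nonnegative matrices then gives $\rho_A\ge\rho_B>\rho_{B'}=1$.

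You instead exhibit exponentially many closed walks at the shared vertex and apply Gelfand's formula. This is more elementary and self-contained: it needs neither irreducibility, nor the strict part of Perron--Frobenius, nor part (ii). It also yields an explicit bound $\rho_A\ge 2^{1/(pq)}$ in terms of the two cycle lengths, which makes precise the paper's heuristic that interacting cycles let activity accumulate. The paper's route is shorter once the cited theorem is granted, but it only gives $\rho_A>1$.

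Three minor remarks:
\begin{enumerate}
\item Fix a norm that dominates every entry of a nonnegative matrix, e.g.\ the maximum row sum, so that $\|A^{mL}\|\ge (A^{mL})_{vv}\ge 2^m$ is justified before you pass to the subsequence $k=mL$ of the Gelfand limit.
\item Padding to the common length $pq$ is optional. A simple cycle traversed from $v$ revisits $v$ only at its end, so every word in $C_1,C_2$ factors uniquely at its visits to $v$. Counting compositions into parts $p,q$ then sharpens the bound to $\rho_A\ge\lambda$, where $\lambda>1$ solves $\lambda^{-p}+\lambda^{-q}=1$.
\item Keep your convention that a self-loop is a $1$-cycle explicit. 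The paper tacitly excludes self-loops, and if they were allowed but not counted as cycles the statement would fail. For example, a loop at a vertex of a $2$-cycle gives $\rho_A=(1+\sqrt5)/2$.
\end{enumerate}
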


\begin{proof}[Proof of (i)]
    Consider $A^k$, $k\in\mathbb{N}$, with $(i,j)^{th}$ entries equal to the number of distinct walks of length $k$ from vertex $i$ to vertex $j$.

    In a DAG, no walk intersects with the same vertex twice, since doing so would constitute a cycle. Therefore, there are no walks of length $N$ (or longer) between any pair of vertices, hence $A^N=0$.

    $A$ therefore satisfies the definition of a nilpotent matrix, which is equivalent to its only complex eigenvalue being zero.
\end{proof}

\begin{proof}[Proof of (ii)]
    The matrix $A$ is irreducible if and only if $G$ is strongly-connected. We can permute the vertices of $G$ such that $A$ is in block upper-triangular form, with irreducible diagonal blocks which represent the adjacency matrices of the strongly-connected components,

    \begin{equation}
        \tilde{A} = \begin{pmatrix}
            B_1 & * & ... & * \\
            0 & B_2 & ... & * \\
            ... & ... & ... & ... \\
            0 & 0 & ... & B_M
        \end{pmatrix},
    \end{equation}

    where $B_i$ are either $1\times 1$ matrices with entry zero, corresponding to individual vertices, or irreducible matrices, corresponding to strongly-connected components of two or more vertices.

    Then, the spectrum of $\tilde{A}$, and equivalently the spectrum of $A$, is the union of the spectra of the irreducible diagonal blocks. That is, the adjacency spectrum of $G$ is the union of the adjacency spectra of its strongly-connected components.
    
    In case \textit{(ii)}, this is the union of the adjacency spectra of isolated vertices, and isolated cycles. Isolated vertices ($B_i=\left[0\right]$) trivially have eigenvalue zero, whilst $k-$cycles are represented by circulant matrices with eigenvalues the $k^{th}$ roots of unity. Therefore, in this case, all eigenvalues have absolute value either 0 or 1, and eigenvalues of absolute value 1 exist: the spectral radius is 1.
\end{proof}

\begin{proof}[Proof of (iii)]
    It now suffices to prove that the deletion of edges from a strongly-connected digraph strictly decreases the spectral radius, since a graph with a spectral radius of 1, by part \textit{(ii)}, can be obtained through the deletion of edges to ensure that only cycles remain which do not overlap.

    The Perron-Frobenius theorem (as stated in~\cite{brouwer_spectra_2012}, Theorem 2.2.1 (v))  has, for an irreducible matrix $T \geq 0$ with spectral radius $\theta_0$,
    ``If $0 \leq S \leq T$ or if $S$ is a principal minor of $T$, and $S$ has eigenvalue $\sigma$, then $|\sigma| \leq \theta_0$; if $|\sigma|=\theta_0$, then $S=T$.''

    Note that the condition $T\geq 0$  (i.e., all entries of $T$ are nonnegative) is true for any adjacency matrix defined as above, and that $S$ is not required to be irreducible.

    The deletion of edges from a strongly-connected component with irreducible adjacency matrix $B$ produces an adjacency matrix $B'$ such that $0 \leq B' \leq B$ and $B' \neq B$. Therefore by the above result, $\rho_{B'}<\rho_{B}$.
\end{proof}

\section{Discussion}

We demonstrate in a model of epileptic seizure that directed networks with properties linked to more strongly-connected and less hierarchical architecture, such as high trophic incoherence, high adjacency spectral radius, a large strongly-connected component and low non-normality, are more susceptible to seizure.

In networks with this type of organisation, a node on which a seizure originates is more likely to have influence over large parts of the network; additionally, more complex strongly-connected structures encourage feedback around interacting cycles through which high-amplitude behaviour may accumulate.

These measures are themselves correlated, reflecting a degree of conceptual overlap and some redundancy when considering all in the same network analysis. Out of those selected, trophic incoherence, in particular, whilst underexplored in prior epilepsy literature, shows promise as a predictor of seizure propensity in directed networks. These results might also be more generically translated to other dynamical processes spreading on networks.

Our results translate across different coupling types. This suggests these aspects of network organisation have a relevance which transcends modelling choices, though some consequences of these choices still affect the extent and nature of the relationships observed.

In larger networks, the relationships between network features and simulation outcomes are stronger; smaller networks allow for less discrimination between different structures in simulation results.

We show a link between the spectral radius of the adjacency matrix and the dynamical stability of the deterministic component of the model for some modelling choices, and present a result relating the adjacency spectral radius to cycles and strong-connectivity in networks. We show in the networks simulated that this quantity is strongly correlated with trophic incoherence. Either may be chosen as a useful network metric in different contexts.

Based on the findings of this \textit{in silico} study, we recommend trophic incoherence, or related quantities such as those examined in this work, as network features to be considered in network analysis of epilepsy and other dynamical network processes. The network features analysed here may be useful as potential biomarkers. 

We also make use of spectral features of the directed adjacency matrix, as distinct from those of the Laplacian, noting the potential advantages of the former in some modelling cases. We note that the findings in this work have potential applicability to network models in a variety of contexts beyond epilepsy.

\section{Methods}
\subsection{A phenomenological model} \label{sec:model}

To model the propagation of seizure activity in the brain, we use a model describing a noise-induced transition between between a state of low-amplitude noise, representing the activity of the brain outside of the seizure state, and high-amplitude oscillations representing seizure-like activity~\cite{benjamin_phenomenological_2012,kalitzin_stimulation-based_2010}.
This model, devised to phenomenologically represent dynamics recorded in EEG at the onset of seizures, has been used for computational modelling of epileptic seizure dynamics in a variety of contexts~\cite{benjamin_phenomenological_2012,hebbink_phenomenological_2017,junges_epilepsy_2020,harrington_treatment_2024}.

The ability to transition between the two states is modelled using a modified form of a subcritical Hopf bifurcation, given as,

\begin{equation}\label{eq:bifurcation}
    \frac{\textup{d}z}{\textup{d}t} = f(z,\eta) = z\left ( \eta - 1 + \textup{i} \omega + 2\left | z \right |^{2} - \left | z \right |^{4} \right ),
\end{equation}

where $z$ is the signal representing brain activity, $\eta$ is a parameter representing excitability, which governs the bifurcation as shown in Fig~\ref{fig:bifurcation}, and $\omega$ is the natural frequency of the limit cycle.

\begin{figure}[!h]
    \centering\hspace*{-2.5mm}\includegraphics[width=0.5\textwidth]{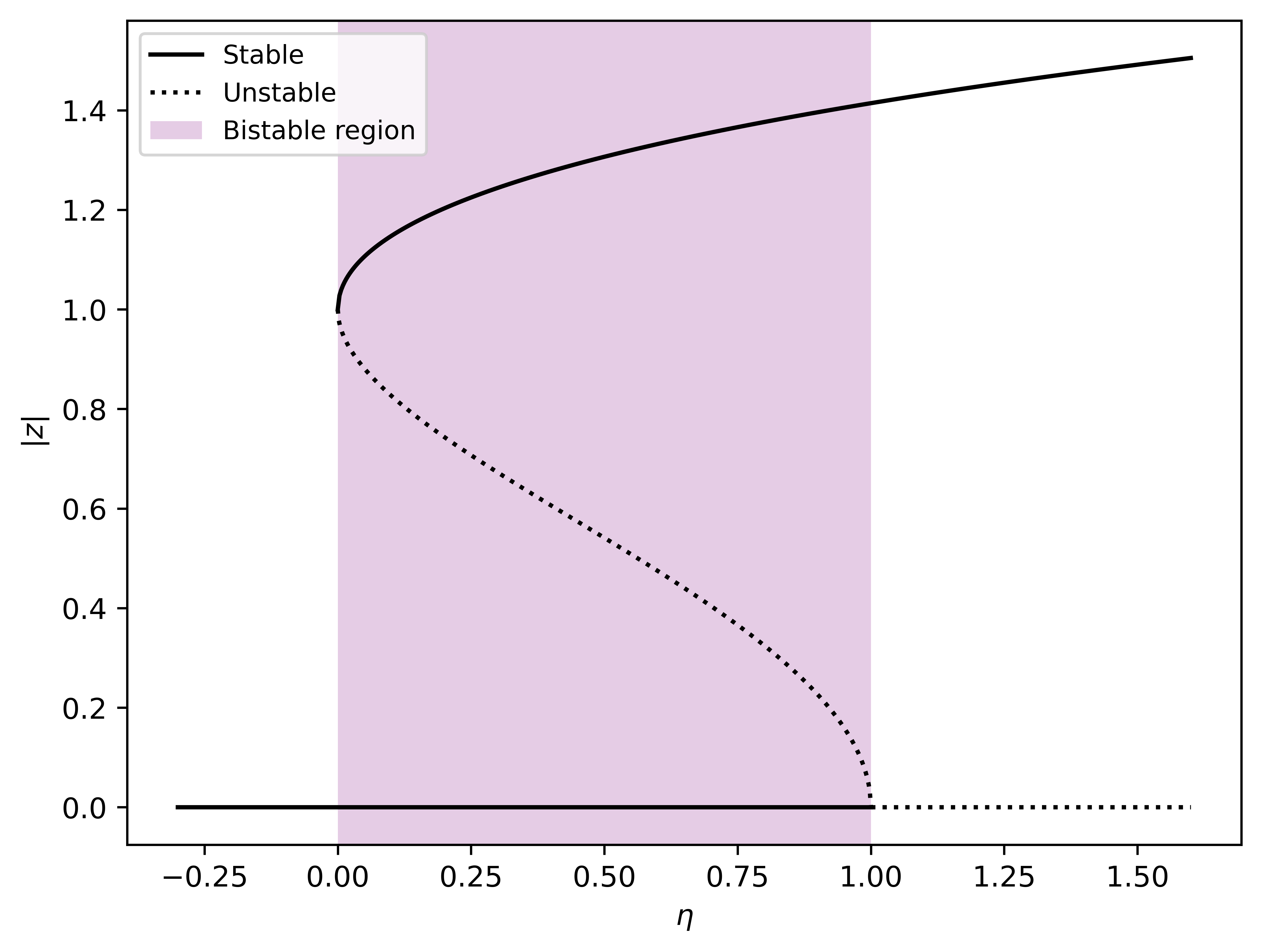}
    \newcaption{Bifurcation diagram of $|z|$ in $\eta$, in the case of Eq~(\ref{eq:bifurcation})}{Stable equilibria of $|z|$ are marked with solid lines and unstable equilibria are marked with dashed lines. An equilibrium of $|z|>0$ is a limit cycle of $z$. The model simulated in this paper modifies this bistable system into the system outlined in Eq~(\ref{eq:dz})--(\ref{eq:deta}).}
    \label{fig:bifurcation}
\end{figure}

We take as feasible values $\eta \in (0,1)$, in which cases both the steady state at $z=0$ and the limit cycle exist and are stable.

This equation is further modified to incorporate stochasticity, spontaneous seizure termination, and network dynamics, as follows.

Stochasticity is introduced in the form of a Brownian noise term $\textup{d}W_i$, scaled by a noise strength parameter $\alpha$. Through noise, representative of background brain activity, the model may be perturbed into the seizure state, allowing for spontaneous onset of seizures.

To ensure seizures terminate on a specified timescale, a transition is forced by modifying the excitability parameter $\eta$ into a time-dependent variable, simultaneously alongside $z$, with a steady state $\eta_0$ when $z=0$, which attracts the system when $z$ is in the oscillatory state. The oscillatory state of $z$ is therefore not an equilibrium of the coupled system $(|z|,\eta)$, but a transient state. An additional parameter $\tau$ determines the timescale on which the system is attracted back towards the steady state.

Finally, to simulate these dynamics on a directed network of $N$ nodes, this system is modelled for each node, with an additional coupling term such that each node $i$ is influenced by any node $j$ for which the adjacency entry $A_{ji}$ is nonzero, by some coupling function $g(z_i,z_j)$.

The complete system of equations is then,

\begin{equation}\label{eq:dz}
    \textup{d}z_{i} = \left( f(z_i,\eta_i)  + \frac{1}{N} \sum_{j=1}^{N} A_{ji} g(z_{i}, z_{j}) \right) \textup{d}t + \alpha \textup{dW}_{i}, 
\end{equation}
\begin{equation}\label{eq:deta}
    \tau \textup{d} \eta_{i} = (\eta_{0,i} - \eta_{i} - |z_{i}|^{2}) \textup{d}t ,
\end{equation}

in which coupling may, for example, be diffusive, such that,

\begin{equation}\label{eq:diffusive}
    g(z_{i}, z_{j}) = \beta (z_{j} - z_{i}),
\end{equation}

or additive, such that,

\begin{equation}\label{eq:additive}
    g(z_{j}) = \gamma z_{j},
\end{equation}

where $\beta$ and $\gamma$ are respectively the strength of each type of coupling.

Both coupling types have been used in literature studying this model~\cite{benjamin_phenomenological_2012,petkov_critical_2014,junges_epilepsy_2020,harrington_treatment_2024}. In a study comparing the two, it was found that the choice of coupling type can drastically affect modelling outcomes; authors concluded that additive coupling may be a more appropriate choice for modelling seizure dynamics~\cite{lopes_role_2023}.

\subsection{Conceptualising stability and instability}

Considering the linear stability of the deterministic component of this dynamic network model, by linearising around its fixed point $(z,\eta)=(0,\eta_0)$ we obtain a Jacobian,

\begin{equation}
\mathcal{J}=
    \begin{cases}
        \begin{pmatrix}
            \frac{\gamma}{N}A^T + H_0 - \mathbb{I}_N & 0 \\
            0 & - \mathbb{I}_N
        \end{pmatrix}
        , & g(z_{j}) = \gamma z_{j} 
        \\
        \begin{pmatrix}
            -\frac{\beta}{N}L_{in}^T + H_0 - \mathbb{I}_N & 0 \\
            0 & - \mathbb{I}_N
        \end{pmatrix}
        , & g(z_{i}, z_{j}) = \beta (z_{j} - z_{i})
        \end{cases}
\end{equation}

where $H_0=\text{diag}(\eta_0)$ is an $N\times N$ matrix with the node-wise values of $\eta_{0,i}$ as diagonal entries, $L_{in}$ is the graph Laplacian with vertex in-degrees as diagonal entries, and $\mathbb{I}_N$ is the $N\times N$ identity matrix.

In the linearised, noiseless system the excitability variable $\eta$ decouples from the activity variable $z$, contributing only the eigenvalue -1 to the spectrum of the Jacobian, and we may focus solely on the upper-left block of each of the above formulations. Indeed, the dynamics of $\eta$ govern seizure offset, while the equation for the dynamics of $z$ alone, setting $\eta=\eta_0$, is the component of the system devised to model seizure onset~\cite{benjamin_phenomenological_2012,terry_seizure_2012}. 

The linear stability of the deterministic component of the model is therefore governed in the additive coupling case by the maximum (in real part) eigenvalue of
$\frac{\gamma}{N}A^T + H_0 - \mathbb{I}_N$, where the addition of $H_0-\mathbb{I}_N$ would represents a translation of the spectrum along the real axis of the complex plane, in the scenario where $\eta_0$ has the same value on all nodes. Therefore, the graph structure contributes to the linear stability through the maximum eigenvalue of $A^T$, and similarly of $-L^T$ in the diffusive coupling case.

The maximum eigenvalue of $-L^T$ is always zero, whereas the maximum eigenvalue of $A^T$ is the spectral radius of $A$. It is therefore expected that the spectral radius of the adjacency matrix will be related to simulation outcomes, at least in the case of additive coupling.

\subsection{Trophic incoherence}

The notion of trophic incoherence in directed networks describes its ``hierarchicality''. To understand this, the concept of \textit{trophic level} for each node must be defined. This is a value which denotes a node's position in the flow of a network, such that low trophic levels are attributed to ``upstream'' nodes with influence over other nodes in the network, and high trophic levels are attributed to ``downstream'' nodes which are predominantly receivers of information in the network structure. Each edge, then, can be evaluated by the difference in trophic levels between its source and its target. The trophic incoherence of a network is then a measure of its deviation from a hierarchically-structured network in which all of these differences are equal to 1, i.e. all edges communicate upwards one trophic level.

Trophic incoherence was first proposed by Johnson et al~\cite{johnson_trophic_2014}, using a definition based on Levine's trophic levels in ecology~\cite{levine_several_1980}. 
MacKay et al revised the notions of trophic levels and trophic incoherence using the graph Laplacian, full details of which can be seen in~\cite{mackay_how_2020}. 
It is this formulation of trophic incoherence which is used in this paper. Here, 
trophic incoherence is 
measured by the mean squared deviation from 1 of the difference in trophic levels spanned by each edge.
Denoting the new definition of trophic level on node $i$ as $x_i$, the trophic incoherence of a binary network is,

\begin{equation}
    F_0 = \frac{\sum_{i,j=1}^N A_{ij}(x_j - x_i - 1)^2}{\sum_{i,j=1}^N A_{ij}}.
\end{equation}

\subsection{Additional Features}\label{sec:addfeats}

Trophic incoherence is related to the cycle structure and the spectral radius of directed networks~\cite{johnson_looplessness_2017}, as well as to the strong connectivity~\cite{rodgers_strong_2023} -- a graph theoretical feature which has previously been identified as relevant to the spread of seizure activity in network models of epilepsy~\cite{benjamin_phenomenological_2012,schmidt_dynamics_2014}. We record the size of the largest strongly-connected component for each network simulated.

The notion of hierarchy, at least in terms of the presence of ``basal'' nodes or strongly-connected regions, has previously been described for this model using the first transitive component (FTC)~\cite{benjamin_phenomenological_2012,harrington_treatment_2024}.
This is defined as the set of nodes $A$ such that for any $B$ for which there exists a path $B \to A$, there also exists a path $A \to B$. Equivalently this is the union of all basal strongly-connected components (i.e., any strongly-connected components, including those of size 1, with no in-edges originating from elsewhere). The size of the FTC has been compared with outcomes from simulations of this model in previous studies, however it has been noted that the influence of the FTC on simulation outcomes diminishes as network size increases and more complex global network dynamics dominate~\cite{harrington_treatment_2024}.

Non-normality, calculated as $||AA^T-A^TA||$, is a quantity which is related to trophic incoherence~\cite{drysdale_connection_2025}, quantifying a level of asymmetry in the adjacency matrix.

\subsection{Network simulations} \label{sec:simmethods}

For both additive and diffusive coupling, 10000 20-node networks with mean degree 2.5 and 2000 128-node node networks with mean degree 4.0, of varying trophic incoherence, were generated using the generalised preferential preying algorithm~\cite{klaise_neurons_2016}. Trophic incoherence was varied by sampling using randomly-generated values of the trophic specialisation parameter for this algorithm, generated from an exponential distribution with parameter $\theta=3.0$. For comparison, for the additively and diffusively-coupled models 10000 random 20-node networks were also generated using the Erd\H{o}s-Renyi algorithm~\cite{erdos_random_1959} with mean degree 2.5, which are expected to occupy a narrower distribution of trophic incoherence values. The smaller network size was chosen to be similar to routine, 19- or 21- channel EEG recordings from which such networks can be inferred. All networks were required to be at least weakly connected.

For each network generated, the network model was simulated in Julia over a range of values of coupling strength, with five iterations of Brownian noise for each, and parameters as described in Table~\ref{tab:modelparams}, based on methodology from previous work using this model~\cite{harrington_treatment_2024,junges_epilepsy_2020,woldman_evolving_2019}. Excitability values are chosen for each coupling type to result in simulations with nontrivial seizure activity (i.e., not dominated by seizure or non-seizure states).

\begin{table*}[!ht]
\centering
\newcaption{Parameter values used in model simulations}{}
\begin{tabular}{|l+l|l|}
\hline
\textbf{Parameter} & \textbf{Description} & \textbf{Value/s}\\ \thickhline
$T$ & Duration of simulation & $2000.0$ s \\ \hline
$\eta_0$ & Baseline excitability & 0.8 (additive), 0.9 (diffusive) \\ \hline
$\delta t$ & Simulation timestep & $0.0005$ s \\ \hline
$\alpha$ & Noise strength & $0.0535$ \\ \hline
$\beta$, $\gamma$ & Coupling strength & $1.0, 1.5, 2.0, ..., 6.0$ \\ \hline
$\tau$ & $\eta$-slowing parameter & $5.0$ s \\ \hline
$\omega$ & Natural frequency of oscillatory state & $20.0$ Hz \\ \hline
\end{tabular}
\begin{flushleft}
\end{flushleft}
\label{tab:modelparams}
\end{table*}

We measure the propensity for seizures to generate in a network using the brain network ictogenicity \cite{goodfellow_estimation_2016,junges_epilepsy_2020}, representing the proportion of the simulation which is in the seizure state, for which at least two vertices are required to have a signal of sufficiently high amplitude:

\begin{equation}
    \widehat{BNI} = \sum_{t \in \{0, \delta t, 2\delta t, ..., T \}} \frac{f(m)}{TN}
\end{equation}

where $m$ is the number of vertices on which $|z|^2>0.25$ and

\begin{equation}
    f(m) = \begin{cases}
    0, &m=1 \\
    m\delta t, &m \geq 2.
    \end{cases}
\end{equation}

\section*{Acknowledgements}

We would like to thank Dr Wessel Woldman, of Neuronostics Ltd and formerly the University of Birmingham, for his support.

The computations described in this paper were performed using the University of Birmingham’s BlueBEAR HPC service, which provides a High Performance Computing Service to the University’s research community. Further details can be found at http://www.birmingham.ac.uk/bear.

\section*{Funding}

P.K. acknowledges the support of EPSRC grant EP/V520275/1.

\section*{Author contributions}

PK: Conceptualization, Software, Formal Analysis, Visualization, Writing–original draft, Writing–review and editing. CD: Conceptualization, Writing–review and editing, Supervision. SJ: Conceptualization, Writing–review and editing, Supervision.

\section*{Code availability}

Julia code written for this project is available on Github at \hyperlink{https://github.com/pjkissack/TrophicAnalysis.jl.git}{https://github.com/pjkissack/TrophicAnalysis.jl.git}

\printbibliography

\clearpage

\setcounter{figure}{0}
\setcounter{table}{0}
\renewcommand\thefigure{S\arabic{figure}}
\renewcommand\thetable{S\arabic{table}}

\onecolumn

\section*{Supplementary Tables and Figures}

\vspace*{3em}

\begin{figure*}[h]
    \centering
    \includegraphics[width=6in]{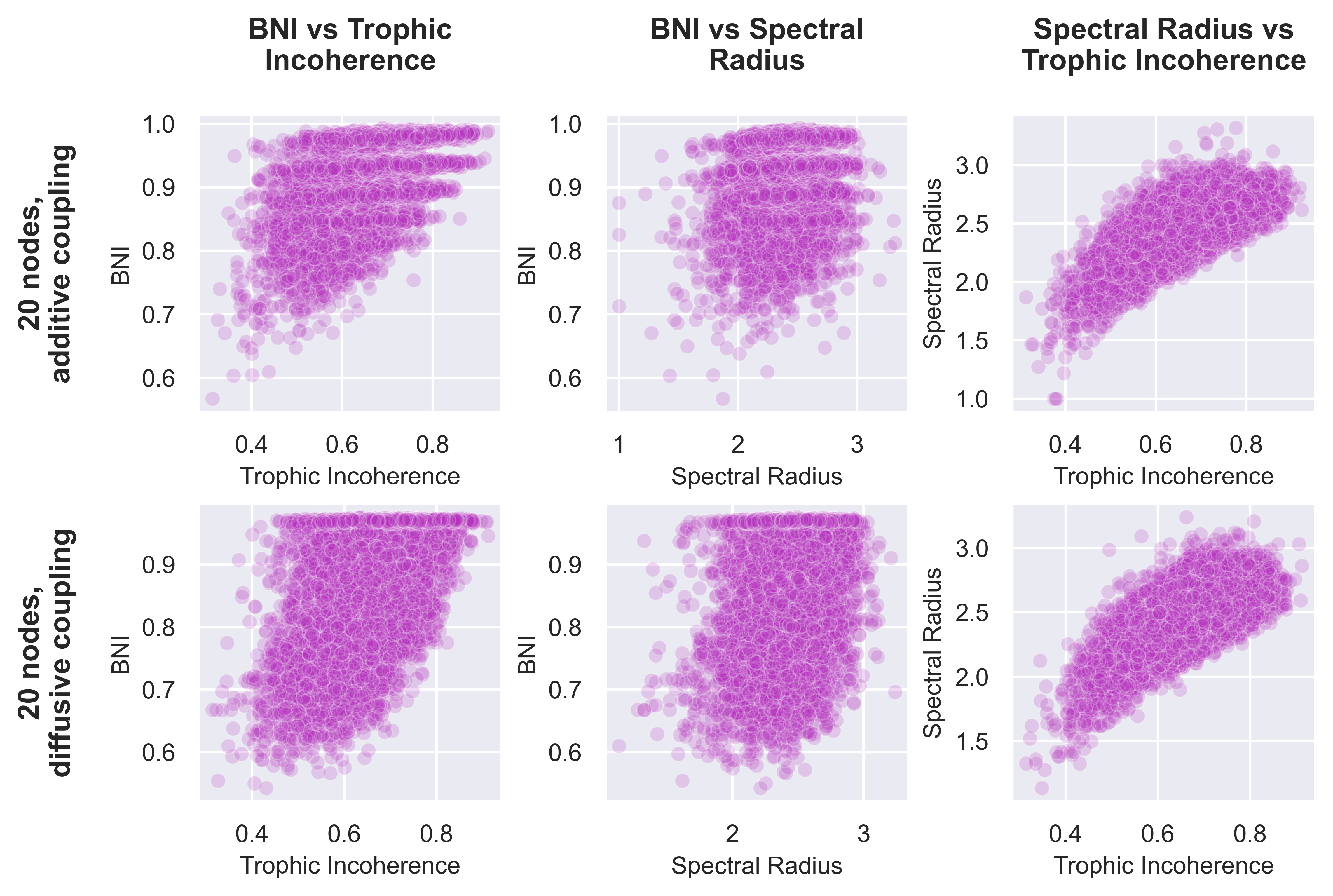}
    \newcaption{Trophic Incoherence, Spectral Radius and BNI in Erd\H{o}s-Renyi networks}{Pairwise scatter plots of
trophic incoherence, BNI and spectral radius, for 20-node networks with mean degree 2.5 sampled
using the Erd\H{o}s-Renyi algorithm and simulated with additive and
diffusive coupling.}
    \label{fig:S1}
\end{figure*}

\begin{table*}[t]
    \centering
    \begin{tabular}{|l+l|l|l|l|l|l|l|}
\hline
 & A & B & C & D & E & F & G  \\
\thickhline
A. BNI &&&&&&& \\ \hline
B. spectral radius & 0.205 &&&&&&
\\ \hline
C. trophic incoherence & 0.533 & 0.689 &&&&&
\\ \hline
D. largest strongly-connected component & 0.629 & 0.253 & 0.698 &&&&
\\ \hline
E. non normality & -0.399 & -0.465 & -0.689 & -0.474 &&&
\\ \hline
F. first transitive component size & 0.285 & 0.023 & 0.100 & 0.165 & -0.070 &&
\\ \hline
G. trophic level of the modal start node & -0.470 & -0.306 & -0.394 & -0.305 & 0.192 & -0.228 & 
\\ \hline
H. reach of the modal start node & 0.775 & 0.259 & 0.514 & 0.684 & -0.359 & 0.248 & -0.577 
\\ \hline

\end{tabular}
    \newcaption{Additively-coupled Erd\H{o}s-Renyi network simulation correlations}{Spearman rank correlations between graph features and simulation outcomes from 10000 20-node networks with mean degree 2.5 sampled using the Erd\H{o}s-Renyi algorithm, simulated with additive coupling. In the first column Brain Network Ictogenicity (BNI), a metric reflecting the propensity of simulated seizure-like behaviour to spread in a network, is compared against a selection of global network properties, and properties of the node in which seizure-like behaviour most frequently emerges across simulations of the network.}
    \label{tab:S1}
\end{table*}

\begin{table*}[t]
    \centering
    \begin{tabular}{|l+l|l|l|l|l|l|l|}
\hline
 & A & B & C & D & E & F & G  \\
\thickhline
A. BNI &&&&&&& \\ \hline
B. spectral radius & 0.235 &&&&&&
\\ \hline
C. trophic incoherence & 0.514 & 0.679 &&&&& 
\\ \hline
D. largest strongly-connected component & 0.612 & 0.224 & 0.692 &&&& 
\\ \hline
E. non normality & -0.346 & -0.473 & -0.696 & -0.468 &&& 
\\ \hline
F. first transitive component size & 0.298 & 0.017 & 0.093 & 0.143 & -0.054 &&
\\ \hline
G. trophic level of the modal start node & 0.095 & -0.002 & -0.053 & -0.027 & 0.053 & 0.487 &
\\ \hline
H. reach of the modal start node & 0.638 & 0.055 & 0.348 & 0.521 & -0.276 & -0.281 & -0.336
\\ \hline

\end{tabular}
    \newcaption{Diffusively-coupled Erd\H{o}s-Renyi network simulation correlations}{Spearman rank correlations between graph features and simulation outcomes from 10000 20-node networks with mean degree 2.5 sampled using the Erd\H{o}s-Renyi algorithm, simulated with diffusive coupling. In the first column Brain Network Ictogenicity (BNI), a metric reflecting the propensity of simulated seizure-like behaviour to spread in a network, is compared against a selection of global network properties, and properties of the node in which seizure-like behaviour most frequently emerges across simulations of the network.}
    \label{tab:S2}
\end{table*}

\begin{figure*}[!h]
    \centering
    \hspace*{-1em}
    \includegraphics[width=7.5in]{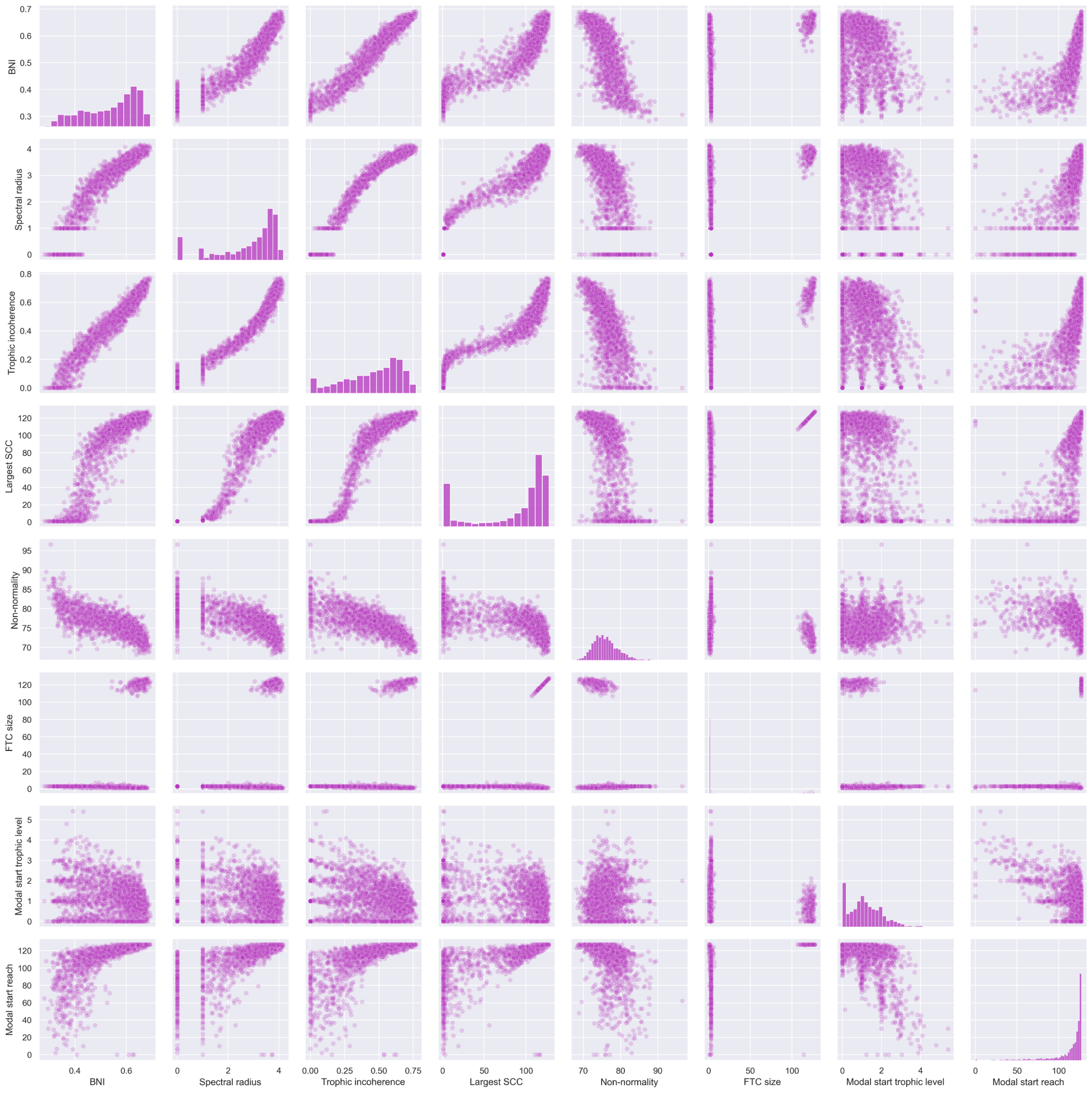}
    \newcaption{Scatter plot matrix of all features for 128-node, additively-coupled generalised preferential preying networks}{Histograms of each feature are plotted in the respective diagonal subplots.}
    \label{fig:S2}
\end{figure*}

\begin{figure*}[!h]
    \centering
    \hspace*{-1em}
    \includegraphics[width=7.5in]{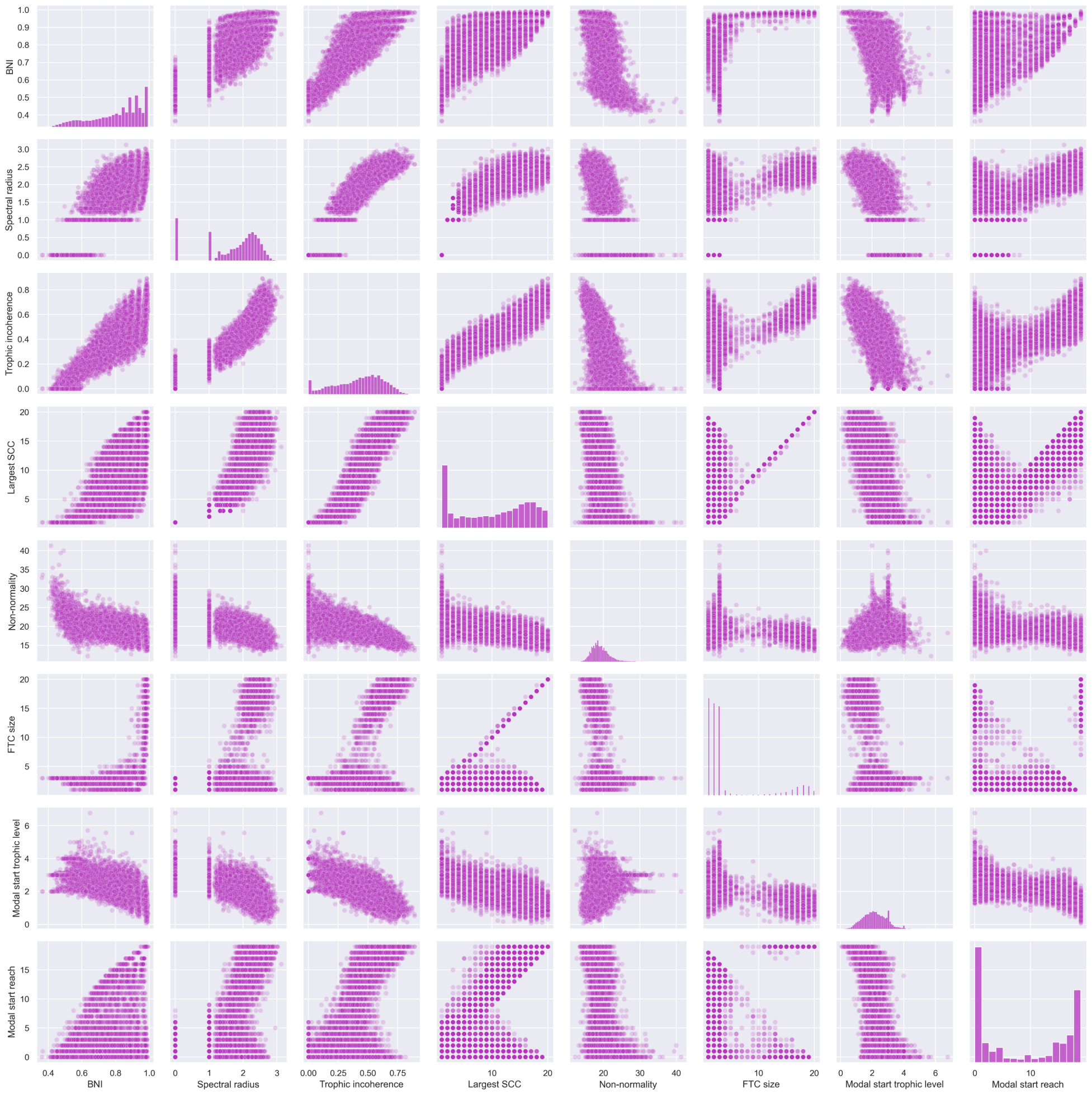}
    \newcaption{Scatter plot matrix of all features for 20-node, additively-coupled generalised preferential preying networks}{Histograms of each feature are plotted in the respective diagonal subplots.}
    \label{fig:S3}
\end{figure*}

\begin{figure*}[!h]
    \centering
    \hspace*{-1em}
    \includegraphics[width=7.5in]{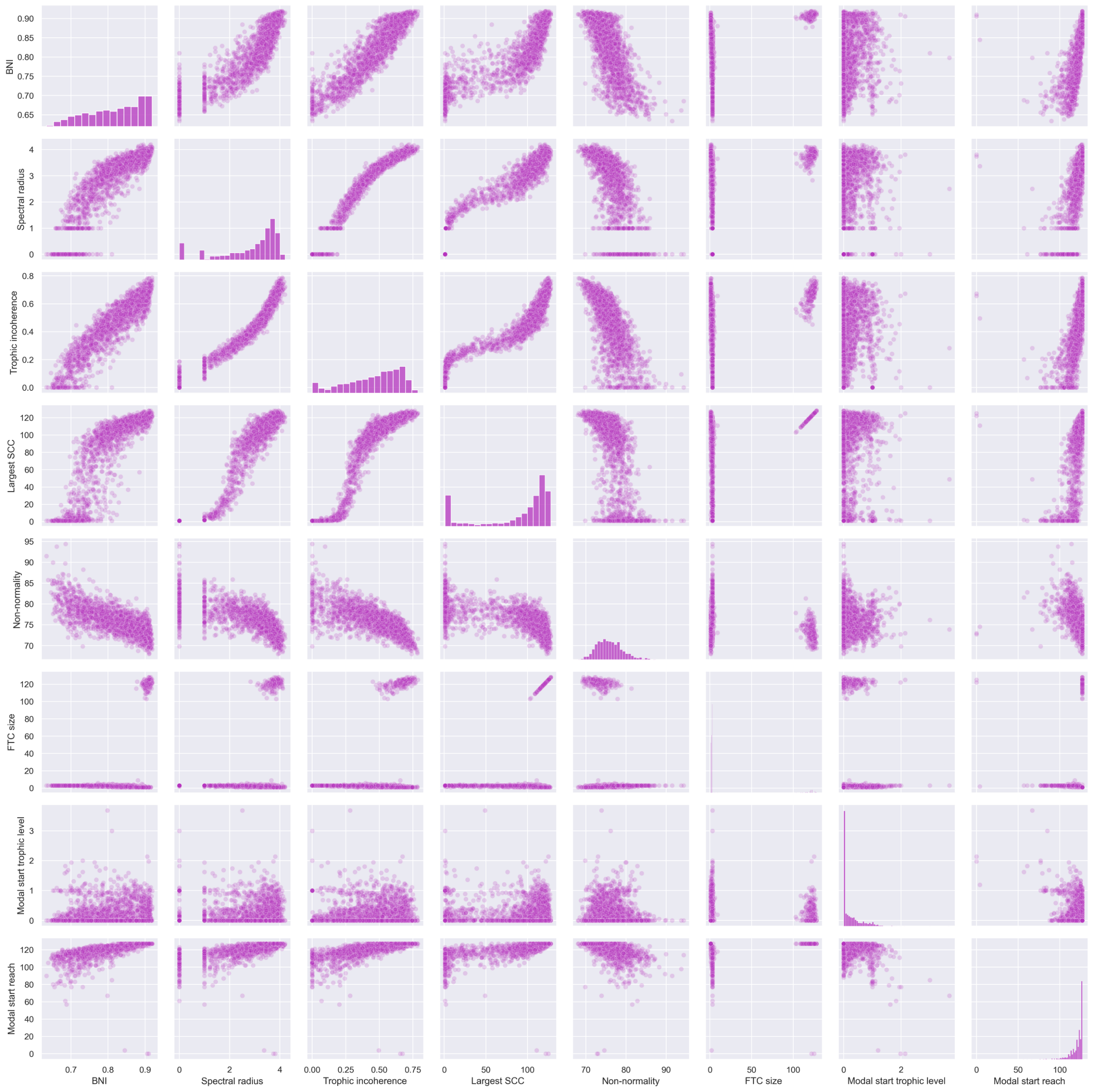}
    \newcaption{Scatter plot matrix of all features for 128-node, diffusively-coupled generalised preferential preying networks}{Histograms of each feature are plotted in the respective diagonal subplots.}
    \label{fig:S4}
\end{figure*}

\begin{figure*}[!h]
    \centering
    \hspace*{-1em}
    \includegraphics[width=7.5in]{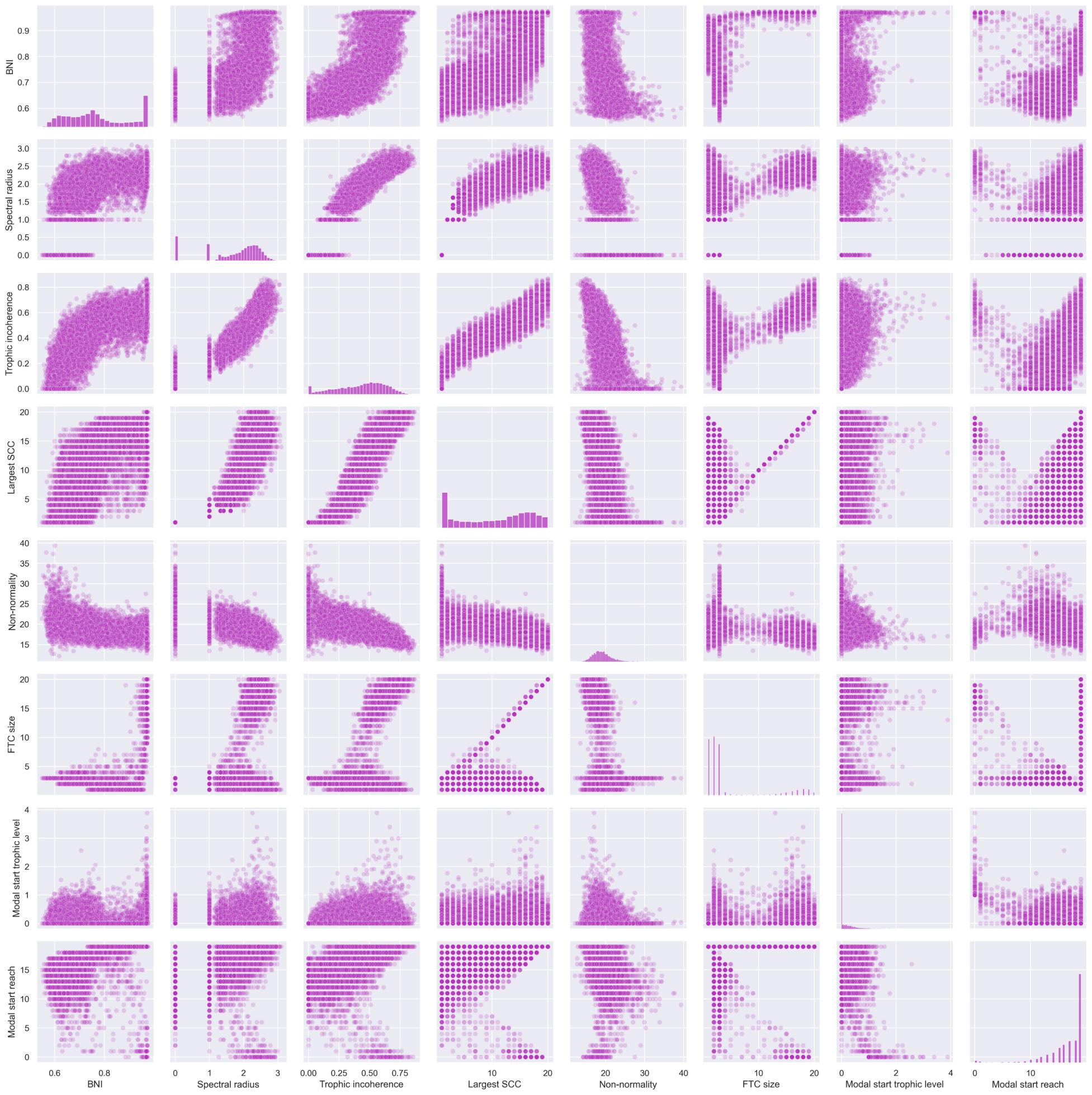}
    \newcaption{Scatter plot matrix of all features for 20-node, diffusively-coupled generalised preferential preying networks}{Histograms of each feature are plotted in the respective diagonal subplots.}
    \label{fig:S5}
\end{figure*}

\begin{figure*}[!h]
    \centering
    \hspace*{-1em}
    \includegraphics[width=7.5in]{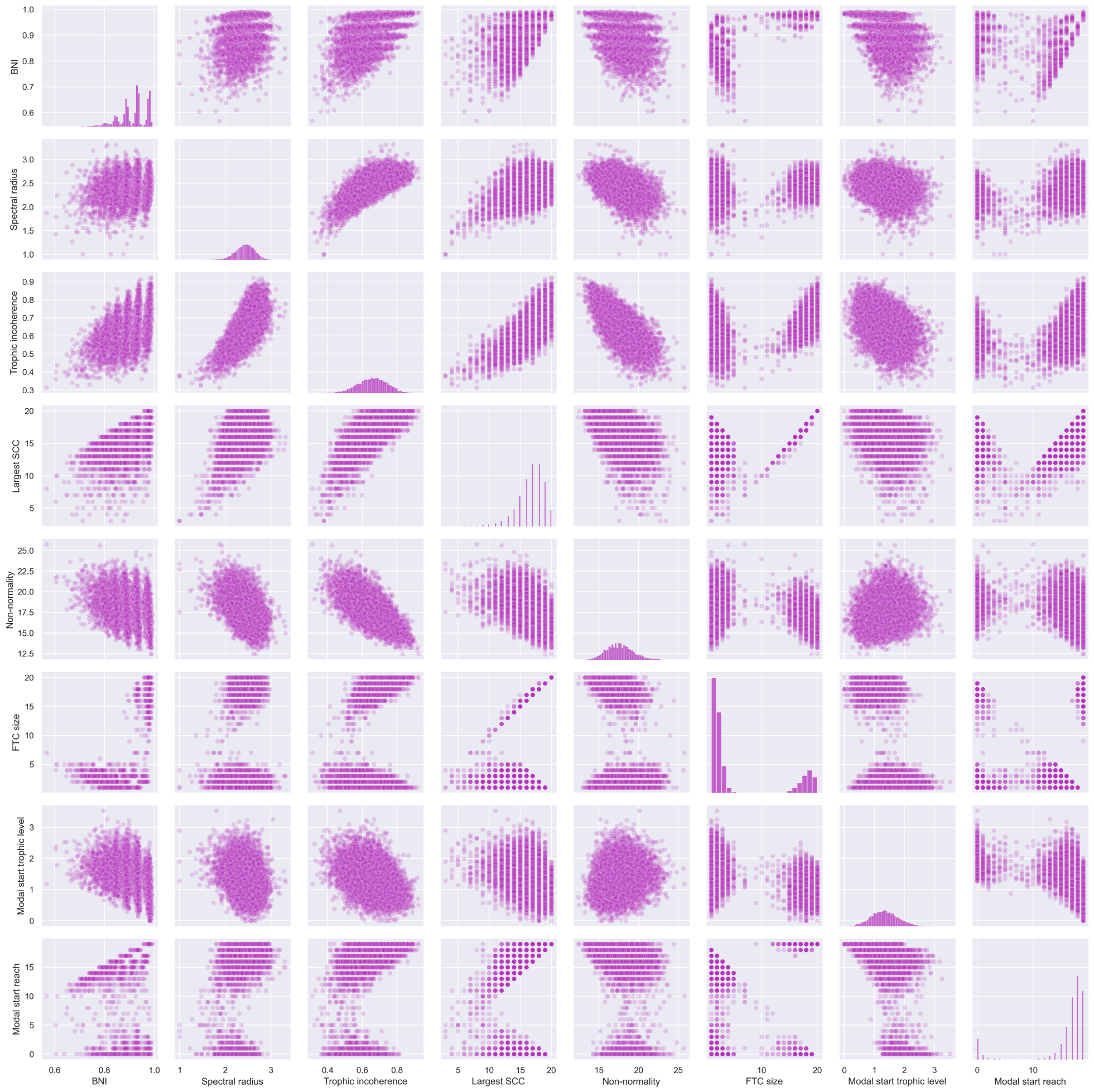}
    \newcaption{Scatter plot matrix of all features for 20-node, additively-coupled Erd\H{o}s-Renyi networks}{Histograms of each feature are plotted in the respective diagonal subplots.}
    \label{fig:S6}
\end{figure*}

\begin{figure*}[!h]
    \centering
    \hspace*{-1em}
    \includegraphics[width=7.5in]{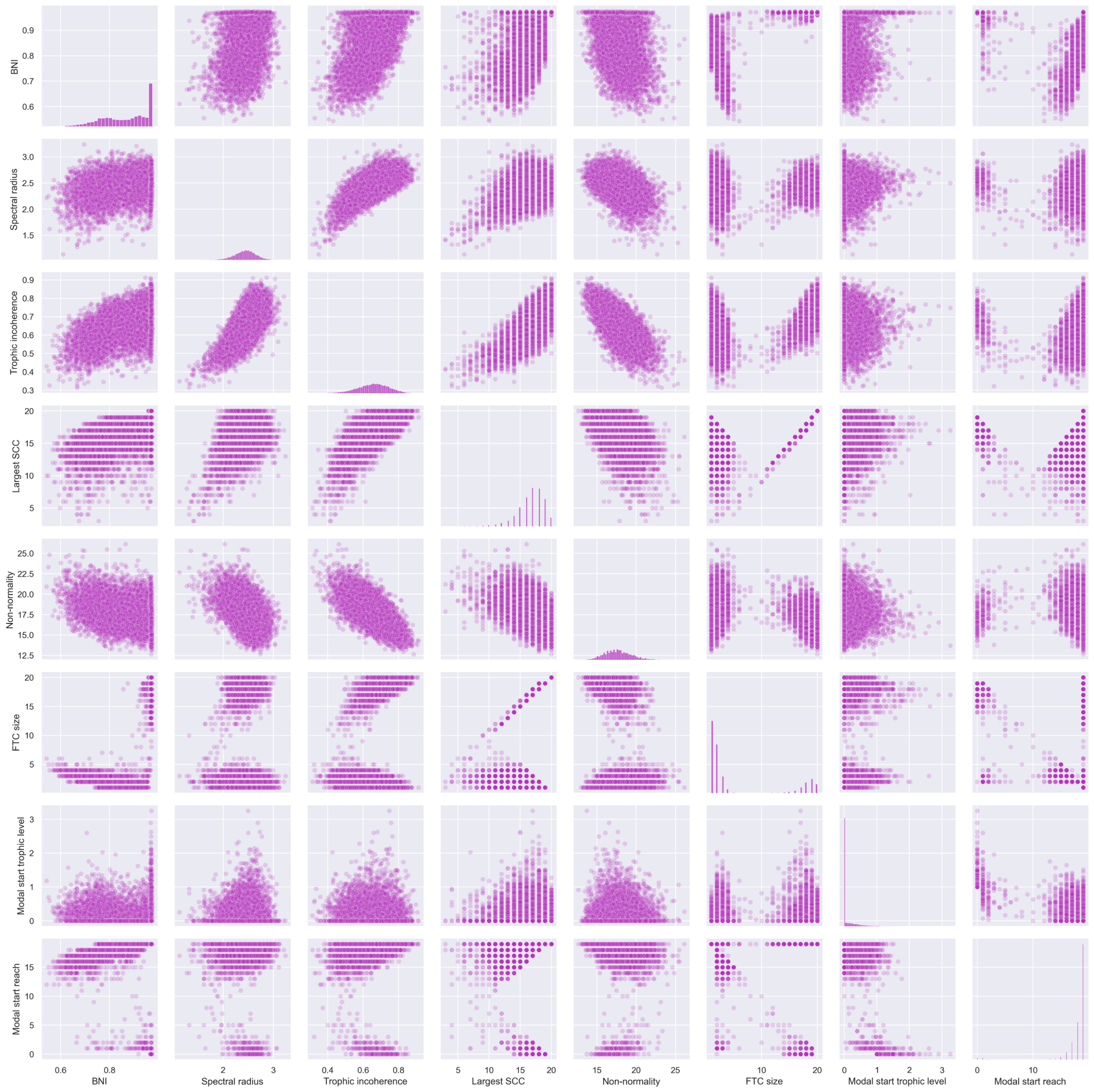}
    \newcaption{Scatter plot matrix of all features for 20-node, diffusively-coupled Erd\H{o}s-Renyi networks}{Histograms of each feature are plotted in the respective diagonal subplots.}
    \label{fig:S7}
\end{figure*}

\end{document}